
\documentclass{mysvjour3}                     

\smartqed  

\usepackage{amssymb}
\usepackage{amsfonts}
\usepackage[colorlinks,citecolor=blue,urlcolor=blue,breaklinks=true]{hyperref}
\usepackage{cite}
\usepackage[cmex10]{amsmath}
\usepackage{url}

\journalname{Cryptogr. Commun.}

\begin{document}

\title{Combinatorial Bounds and Characterizations of\\ Splitting Authentication Codes
\thanks{This work was supported by the Deutsche Forschungsgemeinschaft (DFG) via a Heisenberg grant (Hu954/4) and a Heinz Maier-Leibnitz Prize grant (Hu954/5).}
}

\titlerunning{Splitting Authentication Codes}

\author{Michael Huber}

\authorrunning{Michael Huber}

\institute{M. Huber \at
              Wilhelm-Schickard-Institute for Computer Science, University of Tuebingen,\\
              Sand~13, 72076~Tuebingen, Germany\\
              \email{michael.huber@uni-tuebingen.de}           
          }

\date{Received: July 17, 2009 / Accepted: January 19, 2010}

\maketitle

\begin{abstract}
We present several generalizations of results for splitting authentication codes by studying the aspect of multi-fold security.
As the two primary results, we prove a combinatorial lower bound on the number of encoding rules and a combinatorial characterization of optimal splitting authentication codes that are multi-fold secure against spoofing attacks. The characterization is based on a new type of combinatorial designs, which we introduce and for which basic necessary conditions are given regarding their existence.
\keywords{Information authenticity \and unconditional security \and authentication code \and splitting \and non-deterministic encoding \and splitting design.}
\subclass{94A60 \and 94C30}
\end{abstract}

\section{Introduction}\label{Introduction}

Authenticity is one of the fundamental components in cryptography and information security.
Typically, communicating parties would like to be assured of the authenticity of information they obtain via potentially insecure
channels. Concerning unconditional (theoretical) authenticity, authentication codes can be used to minimize the possibility of an undetected deception. Their initial study appears to be that of Gilbert, MacWilliams \& Sloane~\cite{gil74}. A more general and systematic theory of authenticity was developed by Simmons\cite{Sim84,Sim85}.

We primarily focus on authentication codes with splitting in this paper. In such a code, several messages can be used to communicate a particular plaintext (\emph{non-deterministic encoding}). This concept plays an important role, for instance, in the context of authentication codes that permit arbitration (see, for example,~\cite{John94,Kur94,Kur01,Sim88,Sim90}).
We will deal with splitting authentication codes from a combinatorial point of view. By studying the aspect of multi-fold security, we obtain
several natural generalizations of results on splitting authentication codes.
As the two primary results, we prove a combinatorial lower bound on the number of encoding rules and a combinatorial characterization of optimal splitting authentication codes that are multi-fold secure against spoofing attacks.

For splitting authentication codes that are one-fold secure against spoofing attacks, Brickell~\cite{Brick84} and Simmons~\cite{Sim90} have established a combinatorial lower bound on the number of encoding rules. We will give a combinatorial lower bound accordingly for multi-fold secure splitting authentication codes.
Ogata \emph{et al.}~\cite{Ogata04} introduced splitting balanced incomplete block designs (BIBDs). They proved basic necessary conditions for their existence and derived a Fisher-type inequality. Furthermore, they established an equivalence between splitting BIBDs and optimal one-fold secure splitting authentication codes. We will extend the notion of splitting BIBDs to splitting \mbox{$t$-designs}. Comprehensive necessary conditions regarding their existence will be given. Moreover, we will prove an equivalence between splitting \mbox{$t$-designs} and optimal \mbox{$(t-1)$-fold} secure splitting authentication codes.

The paper is organized as follows. In Section~\ref{SplittingAuthenticationCodes}, we give the definition and concept of multi-fold secure splitting authentication codes. We introduce splitting designs and prove basic necessary conditions for their existence in Section~\ref{SplittingDesigns}.
With respect to our further purposes, we summarize in Section~\ref{Authentication Codes without Splitting} the state-of-the-art for authentication codes without splitting. In Section~\ref{CombinatorialBoundsforSplittingAuthenticationCodes}, lower bounds on deception probabilities and the number of encoding rules are established for multi-fold secure splitting authentication codes.
A combinatorial characterization of optimal multi-fold secure splitting authentication codes in terms of splitting designs is given in Section~\ref{CombinatorialCharacterizationsofSplittingAuthenticationCodes}. We finally conclude the paper and propose further research problems in Section~\ref{conclusion}.


\section{Splitting Authentication Codes}\label{SplittingAuthenticationCodes}

Splitting authentication codes were first introduced by Simmons~\cite{Sim82}. These codes are useful, inter alia, for the analysis of authentication codes with arbitration. In particular,~\cite{Kur01} gives an equivalence between splitting authentication codes and authentication codes with arbitration.

We use the unconditional (theoretical) secure authentication model developed by Simmons (e.g.~\cite{Sim82,Sim84,Sim85,Sim92}). Our notation follows, for the most part, that of~\cite{Mass86,Ogata04,Stin90}.
In this model, three participants are involved: a \emph{transmitter}, a \emph{receiver}, and an \emph{opponent}.  The transmitter wants to communicate information to the receiver via a public communications channel. The receiver in return would like to be confident that any received information actually came from the transmitter and not from some opponent (\emph{integrity} of information). The transmitter and the receiver are assumed to trust each other. Sometimes this is also called an \emph{$A$-code}.

Let $\mathcal{S}$ denote a finite set of \emph{source states} (or \emph{plaintexts}), $\mathcal{M}$ a finite set of \emph{messages} (or \emph{ciphertexts}), and $\mathcal{E}$ a finite set of \emph{encoding rules} (or \emph{keys}). Using an encoding rule $e\in \mathcal{E}$,
the transmitter encrypts a source state $s \in \mathcal{S}$ to obtain the message $m=e(s)$ to be sent over the channel.
The encoding rule is communicated to the receiver via a secure channel prior to any messages being sent.
When it is possible that more than one message can be used to communicate a particular source state $s \in \mathcal{S}$ under the same encoding rule $e\in \mathcal{E}$, then the authentication code is said to have \emph{splitting}. In this case, a message $m \in \mathcal{M}$ is computed as $m=e(s,r)$, where $r$ denotes a random number chosen from some specified finite set $\mathcal{R}$. If we define
\[e(s):=\{m \in \mathcal{M}: m = e(s,r)\;\, \mbox{for some}\;\, r \in \mathcal{R}\}\]
for each encoding rule $e\in \mathcal{E}$ and each source state $s \in \mathcal{S}$,
then splitting means that $\left|e(s)\right|>1$ for some $e\in \mathcal{E}$ and some $s \in \mathcal{S}$.
In order to ensure that the receiver can decrypt the message being sent, it is required for any $e \in \mathcal{E}$ that $e(s) \cap e(s^\prime)=\emptyset$ if $s \neq s^\prime$.
For a given encoding rule $e \in \mathcal{E}$, let \[M(e):= \bigcup_{s \in \mathcal{S}} e(s)\] denote the set of \emph{valid} messages. For an encoding rule $e$ and a set $M^\prime \subseteq M(e)$ of distinct messages, we define
\[f_e(M^\prime):=\{s \in \mathcal{S}: e(s) \cap M^\prime \neq \emptyset\},\]
i.e., the set of source states that will be encoded under encoding rule $e$ by a message in $M^\prime$.
A received message $m$ will be accepted by the receiver as being authentic if and only if $m \in M(e)$. When this is fulfilled, the receiver decrypts the message $m$ by applying the decoding rule $e^{-1}$, where \[e^{-1}(m)=s\;\, \mbox{if}\;\,  m = e(s,r)\;\, \mbox{for some}\;\, r \in \mathcal{R}.\]
A splitting authentication code is called \emph{$c$-splitting} if
\[\left|e(s)\right|=c\]
for every encoding rule $e \in \mathcal{E}$ and every source state \mbox{$s \in \mathcal{S}$}.
We note that an authentication code can be represented algebraically by a $(\left|\mathcal{E}\right| \times \left|\mathcal{S}\right|)$\emph{-encoding matrix} with the rows indexed by the encoding rules $e \in \mathcal{E}$, the columns indexed by the source states $s \in \mathcal{S}$, and the entries defined by $a_{es}:=e(s)$. As a simple example, Table~\ref{encod_gen} displays an encoding matrix of
a \mbox{$2$-splitting} authentication code for $2$ source states, having $9$ messages and $9$ encoding rules (cf.~Example~\ref{example_concrete}).

\begin{table}
\renewcommand{\arraystretch}{1.3}
\caption{An example of a splitting authentication code.}\label{encod_gen}

\hspace*{0.1cm}

\begin{center}

\begin{tabular}{|c| c c|}
  \hline
  & $s_1$ & $s_2$ \\
  \hline
  $e_1$ & \{$m_1,m_2$\} & \{$m_3,m_5$\}\\
  $e_2$ & \{$m_2,m_3$\} & \{$m_4,m_6$\}\\
  $e_3$ & \{$m_3,m_4$\} & \{$m_5,m_7$\}\\
  $e_4$ & \{$m_4,m_5$\} & \{$m_6,m_8$\}\\
  $e_5$ & \{$m_5,m_6$\} & \{$m_7,m_9$\}\\
  $e_6$ & \{$m_6,m_7$\} & \{$m_8,m_1$\}\\
  $e_7$ & \{$m_7,m_8$\} & \{$m_9,m_2$\}\\
  $e_8$ & \{$m_8,m_9$\} & \{$m_1,m_3$\}\\
  $e_9$ & \{$m_9,m_1$\} & \{$m_2,m_4$\}\\
  \hline
\end{tabular}

\end{center}
\end{table}

\subsection{Protection Against Spoofing Attacks}

We address the scenario of a \emph{spoofing attack} of order $i$ (cf.~\cite{Mass86}):
Suppose that an opponent observes $i\geq 0$ distinct messages, which are sent through the public channel using the same encoding rule. The opponent then inserts a new message $m^\prime$ (being distinct from the $i$ messages already sent), hoping to have it accepted by the receiver as authentic.
The cases $i=0$ and $i=1$ are called \emph{impersonation game} and \emph{substitution game}, respectively. These cases have been studied in detail in recent years for authentication codes without splitting (see, e.g.,~\cite{Stin92,Stin96}) and with splitting (see, e.g.,~\cite{Blund99,DeSoete91,Ogata04}). However, much less is known for the cases $i \geq 2$ in particular for splitting authentication codes.

For any $i$, we assume that there is some probability distribution on the set of \mbox{$i$-subsets} of source states, so that any set of $i$ source states has a non-zero probability of occurring. For simplification, we ignore the order in which the $i$ source states occur, and assume that no source state occurs more than once.
Given this probability distribution on the set $\mathcal{S}$ of source states, the receiver and transmitter also choose a probability distribution on the set $\mathcal{E}$ of encoding rules, called an \emph{encoding strategy}. It is assumed that the opponent knows the encoding strategy being used. If splitting occurs, then the receiver/transmitter will also choose a \emph{splitting strategy} to determine $m \in \mathcal{M}$, given $s \in \mathcal{S}$ and $e \in \mathcal{E}$ (this corresponds to \emph{non-deterministic encoding}). The transmitter/receiver will determine these strategies to minimize the chance of being deceived by the opponent.
The \emph{deception probability} $P_{d_i}$ denotes the probability that the opponent can deceive the transmitter/receiver with a spoofing attack of order $i$.


\section{Splitting Designs}\label{SplittingDesigns}

There are natural and deep connections between authentication codes and combinatorial designs, see, for example,~\cite{crc06,Hub2009,Ogata04,pei06,Stin90,Stin92,Stin96}. The close relationship between
cryptography and designs was presumably first revealed in Shannon's classical paper~\cite{Shan49}
on secrecy systems.

In order to give a combinatorial characterization of multi-fold secure splitting authentication codes in the remainder of the paper,
we define in this section a new type of combinatorial designs. Let us first
recall the classical notion of a combinatorial \mbox{$t$-design} (see, for instance,~\cite{BJL1999}):

\begin{definition}
For positive integers $t \leq k \leq v$ and $\lambda$, a
\mbox{\emph{$t$-$(v,k,\lambda)$ design}} $\mathcal{D}$ is a pair \mbox{$(X,\mathcal{B})$}, satisfying
the following properties:

\begin{enumerate}

\item[(i)] $X$ is a set of $v$ elements, called \emph{points},

\smallskip

\item[(ii)] $\mathcal{B}$ is a family of \mbox{$k$-subsets} of $X$, called \emph{blocks},

\smallskip

\item[(iii)] every \mbox{$t$-subset} of $X$ is contained in exactly $\lambda$ blocks.

\end{enumerate}
\end{definition}

By convention, $b:=\left| \mathcal{B} \right|$ denotes the number of blocks. It is easily seen that
\[b=\lambda {v \choose t}/ {k \choose t}.\]

For encyclopedic references on combinatorial \mbox{$t$-designs}, we refer to~\cite{BJL1999,crc06}. A recent treatment on highly regular designs and their applications in information and coding theory can be found, e.g., in~\cite{Hu2008,Hu2009}.

The notion of a splitting balanced incomplete block design (BIBD) have been introduced by Ogata \emph{et al.}~\cite{Ogata04}.
We will extend this concept to \emph{splitting \mbox{$t$-designs}}:

\begin{definition}\label{splittdesign}
For positive integers $t,v,b,c,u,\lambda$ with \mbox{$t \leq u$} and \mbox{$cu \leq v$}, a
\mbox{$t$-$(v,b,l=cu,\lambda)$} \emph{splitting design} $\mathcal{D}$ is a pair \mbox{$(X,\mathcal{B})$}, satisfying
the following properties:

\begin{enumerate}

\item[(i)] $X$ is a set of $v$ elements, called \emph{points},

\smallskip

\item[(ii)] $\mathcal{B}$ is a family of \mbox{$l$-subsets} of $X$, called \emph{blocks}, such that every block $B_i \in \mathcal{B}$
$(1\leq i \leq \left|\mathcal{B}\right|=:b)$ is expressed as a disjoint union \[B_i= B_{i,1} \cup \cdots \cup B_{i,u}\] with $\left|B_{i,1}\right| = \cdots = \left|B_{i,u}\right|=c$ and $\left|B_i\right|=l=cu$,

\smallskip

\item[(iii)]  every \mbox{$t$-subset} $\{x_m\}_{m=1}^t $of $X$ is contained in exactly $\lambda$ blocks $B_i=B_{i,1} \cup \cdots \cup B_{i,u}$ such that
\[x_m \in B_{i,j_m} \quad (j_m \; \mbox{between} \;  1 \; \mbox{and} \; u)\]
for each $1\leq m \leq t$,  and $j_1, \ldots, j_t$ are mutually distinct.
\end{enumerate}
\end{definition}

\begin{example}\label{example}
A splitting \mbox{$2$-design} is a splitting BIBD. As a simple example, take as point set \[X =\{1,2,3,4,5,6,7,8,9\}\]
and as block set \[\mathcal{B}=\{B_1,\ldots,B_9\}\] with
\begin{align*}{}
B_1 &=\{\{1,2\},\, \{3,5\}\}\\
B_2 &=\{\{2,3\},\, \{4,6\}\}\\
B_3 &=\{\{3,4\},\, \{5,7\}\}\\
B_4 &=\{\{4,5\},\, \{6,8\}\}\\
B_5 &=\{\{5,6\},\, \{7,9\}\}\\
B_6 &=\{\{6,7\},\, \{8,1\}\}\\
B_7 &=\{\{7,8\},\, \{9,2\}\}\\
B_8 &=\{\{8,9\},\, \{1,3\}\}\\
B_9 &=\{\{9,1\},\, \{2,4\}\}.
\end{align*}
This gives a \mbox{$2$-$(9,9,4=2\times2,1)$} splitting design (see~\cite[Ex.\,5.1]{Ogata04}).
\end{example}

\begin{example}\label{example_Hu}
A \mbox{$3$-$(10,15,6=2\times3,1)$} splitting design can be obtained (via a computer search) by taking
as point set \[X =\{1,2,3,4,5,6,7,8,9,0\}\]
and as block set \[\mathcal{B}=\{B_1,\ldots,B_{15}\}\] with
\begin{align*}{}
B_1 &=\{\{1,2\},\, \{4,0\},\, \{5,9\}\}\\
B_2 &=\{\{1,3\},\, \{2,8\},\, \{5,0\}\}\\
B_3 &=\{\{1,4\},\, \{3,8\},\, \{6,9\}\}\\
B_4 &=\{\{1,5\},\, \{4,7\},\, \{6,8\}\}\\
B_5 &=\{\{1,7\},\, \{2,3\},\, \{4,8\}\}\\
B_6 &=\{\{1,8\},\, \{2,5\},\, \{6,9\}\}\\
B_7 &=\{\{1,8\},\, \{6,7\},\, \{9,0\}\}\\
B_8 &=\{\{1,9\},\, \{2,5\},\, \{3,7\}\}\\
B_9 &=\{\{1,9\},\, \{3,4\},\, \{7,0\}\}\\
B_{10} &=\{\{2,4\},\, \{5,6\},\, \{7,9\}\}\\
B_{11} &=\{\{2,5\},\, \{4,7\},\, \{3,0\}\}\\
B_{12} &=\{\{2,9\},\, \{6,8\},\, \{3,0\}\}\\
B_{13} &=\{\{2,0\},\, \{4,5\},\, \{6,8\}\}\\
B_{14} &=\{\{3,7\},\, \{4,6\},\, \{8,0\}\}\\
B_{15} &=\{\{3,9\},\, \{5,7\},\, \{6,0\}\}.
\end{align*}
\end{example}

We prove some basic necessary conditions for the existence of splitting designs:

\begin{proposition}\label{s-design_splitt}
Let $\mathcal{D}=(X,\mathcal{B})$ be a \mbox{$t$-$(v,b,l=cu,\lambda)$} splitting design, and for a positive integer $s \leq t$, let $S \subseteq X$
with $\left|S\right|=s$. Then the number of blocks containing
each element of $S$ as per Definition~\ref{splittdesign} is given by
\[\lambda_s = \lambda \frac{{v-s \choose t-s}}{c^{t-s}{u-s \choose t-s}}.\]
In particular, for $t\geq 2$, a \mbox{$t$-$(v,b,l=cu,\lambda)$} splitting design is
also an \mbox{$s$-$(v,b,l=cu,\lambda_s)$} splitting design.
\end{proposition}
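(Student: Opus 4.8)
The plan is to establish the formula by a double-counting argument, analogous to the derivation of the classical identity $\lambda_s = \lambda\binom{v-s}{t-s}/\binom{k-s}{t-s}$ for ordinary $t$-designs, but adapted so as to keep track of the group structure $B_i = B_{i,1}\cup\cdots\cup B_{i,u}$. For a fixed $s$-subset $S$, I write $\lambda_s(S)$ for the number of blocks that contain every element of $S$ in pairwise distinct groups (this is precisely the count referred to in the statement). The goal is to show $\lambda_s(S)$ equals the claimed expression, which in particular forces it to be independent of $S$.

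First I would count, in two ways, the set $\mathcal{P}$ of pairs $(T,B)$ where $T$ is a $t$-subset of $X$ with $S \subseteq T$, and $B \in \mathcal{B}$ is a block that contains all $t$ points of $T$ in mutually distinct groups (exactly the incidence of Definition~\ref{splittdesign}(iii)). Counting by the first coordinate: each $t$-subset $T$ with $S \subseteq T$ arises by adjoining $t-s$ further points chosen from $X \setminus S$, so there are $\binom{v-s}{t-s}$ such sets $T$; and by property (iii) each of them lies in exactly $\lambda$ blocks in the required fashion. Hence $|\mathcal{P}| = \lambda\binom{v-s}{t-s}$.

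Next I would count by the second coordinate. Since $B = B_{i,1}\cup\cdots\cup B_{i,u}$ is a \emph{disjoint} union, every point of $B$ lies in a unique group, so a block contributes to $\mathcal{P}$ only if it already contains the $s$ points of $S$ in $s$ pairwise distinct groups; there are $\lambda_s(S)$ such blocks. For one of them, a valid extension $T$ is obtained by selecting $t-s$ of the $u-s$ groups not occupied by $S$, namely $\binom{u-s}{t-s}$ choices, and one of the $c$ points in each selected group, giving $c^{t-s}$ choices. Crucially, this number depends only on $c,u,s,t$ and not on the particular block, so $|\mathcal{P}| = \lambda_s(S)\,c^{t-s}\binom{u-s}{t-s}$. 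Equating the two expressions yields $\lambda_s(S) = \lambda\binom{v-s}{t-s}\big/\bigl(c^{t-s}\binom{u-s}{t-s}\bigr)$, as asserted; since the right-hand side is independent of the chosen $S$, the design also satisfies Definition~\ref{splittdesign}(iii) with parameter $s$ in place of $t$, i.e.\ it is an $s$-$(v,b,l=cu,\lambda_s)$ splitting design, which gives the ``in particular'' claim (and consistently $\lambda_t = \lambda$ when $s=t$).

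The routine parts are the two enumerations; the point requiring care, and the main obstacle, is verifying that within a contributing block the number of admissible extensions of $S$ to $T$ is genuinely independent of the block. This rests on two structural facts I would state explicitly: that disjointness of the groups forces each element of $S$ into a single, well-determined group (so $S$ is never ``placed'' in more than one way and no overcounting arises), and that the $t-s$ added points must avoid both one another's groups and the $s$ groups already used by $S$, leaving exactly $u-s$ available groups of size $c$. Once these are pinned down, the independence, and hence the factorization of $|\mathcal{P}|$, follows immediately.
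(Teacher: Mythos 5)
Your proof is correct and follows essentially the same route as the paper: both count in two ways the pairs $(T,B)$ with $S\subseteq T$, $|T|=t$, and $T$ contained in $B$ with its points in mutually distinct groups, obtaining $\lambda\binom{v-s}{t-s}$ one way and $\lambda_s\,c^{t-s}\binom{u-s}{t-s}$ the other. Your additional remarks on why the per-block extension count is block-independent simply make explicit what the paper leaves implicit.
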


\begin{proof}
We count in two ways the number of pairs $(T,B_i)$, where $T:=\{x_m\}_{m=1}^t\subseteq
X$ and $B_i = \bigcup_{j=1}^u B_{i,j} \in \mathcal{B}$ such that
\[x_m \in B_{i,j_m}\]
for each $1\leq m \leq t$ with $j_1, \ldots, j_t$ mutually distinct, and \mbox{$S:=\{\tilde{x}_m\}_{m=1}^s \subseteq T$}. First, each of the $\lambda_s$ blocks $B_i=\bigcup_{j=1}^u B_{i,j}$ such that
\[\tilde{x}_m \in B_{i,j_m}\]
for each $1\leq m \leq s$ with $j_1, \ldots, j_s$ mutually distinct gives
\[\frac{\prod_{i=s}^{t-1} (l-ic)}{(t-s)!}=c^{t-s}{u-s \choose t-s}\]
such pairs. Second, there
are \[{v-s \choose t-s}\] such subsets $T \subseteq X$ with $S \subseteq T$, each giving $\lambda$ pairs by
Definition~\ref{splittdesign}.\qed
\end{proof}

As it is customary for \mbox{$t$-designs}, we also set $r:= \lambda_1$ denoting the
number of blocks containing a given point. The above elementary counting arguments give the following
assertions.

\begin{proposition}\label{Comb_t=5_splitt}
Let $\mathcal{D}=(X,\mathcal{B})$ be a \mbox{$t$-$(v,b,l=cu,\lambda)$} splitting design. Then the following holds:

\begin{enumerate}

\item[\emph{(a)}] $bl = vr.$

\smallskip

\item[\emph{(b)}] $\displaystyle{{v \choose t} \lambda = bc^t {u \choose t}.}$

\smallskip

\item[\emph{(c)}] $rc^{t-1}(u-1)=\lambda_2(v-1)$ for $t \geq 2$.

\end{enumerate}
\end{proposition}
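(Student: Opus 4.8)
The plan is to establish all three parts by double counting, invoking the closed form for $\lambda_s$ from Proposition~\ref{s-design_splitt} where convenient. For part~(a) I would count incidences $(x,B_i)$ with $x\in B_i$: summing over the $b$ blocks, each an $l$-subset, gives $bl$, while summing over the $v$ points, each lying (by $r=\lambda_1$) in exactly $r$ blocks, gives $vr$; hence $bl=vr$. For part~(b) I would count pairs $(T,B_i)$ in which $T$ is a $t$-subset of $X$ meeting the block $B_i$ in $t$ mutually distinct parts. Counting by $T$, there are $\binom{v}{t}$ choices, each contained in exactly $\lambda$ blocks by Definition~\ref{splittdesign}(iii), for a total of $\binom{v}{t}\lambda$; counting by block, each $B_i$ admits $\binom{u}{t}$ choices of $t$ distinct parts and $c^t$ choices of one point per chosen part, for a total of $bc^t\binom{u}{t}$. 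Equating the two counts gives $\binom{v}{t}\lambda=bc^t\binom{u}{t}$; this is simply the $s=0$ version of the argument already used in Proposition~\ref{s-design_splitt}.

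For part~(c) I would relate $r=\lambda_1$ and $\lambda_2$ directly through the formula of Proposition~\ref{s-design_splitt}. Substituting $r=\lambda\binom{v-1}{t-1}/\bigl(c^{t-1}\binom{u-1}{t-1}\bigr)$ and $\lambda_2=\lambda\binom{v-2}{t-2}/\bigl(c^{t-2}\binom{u-2}{t-2}\bigr)$ and forming the quotient, the binomial factors collapse via $\binom{v-2}{t-2}/\binom{v-1}{t-1}=(t-1)/(v-1)$ and $\binom{u-1}{t-1}/\binom{u-2}{t-2}=(u-1)/(t-1)$, while the powers of $c$ combine as $c^{(t-1)-(t-2)}=c$. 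This gives
\[
\frac{\lambda_2}{r}=\frac{c(u-1)}{v-1}.
\]
The same relation follows from a direct count: fixing a point $x$, each of the $r$ blocks through $x$ places $x$ in one part and so pairs $x$ with the $c(u-1)$ points lying in the remaining $u-1$ parts (automatically in distinct parts), yielding $rc(u-1)$ such (block, partner) incidences, whereas grouping them by the partner $y\neq x$ yields $\lambda_2(v-1)$.

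The delicate point — and the source of the discrepancy noted previously — is exactly the exponent of $c$ in part~(c). Both computations above produce a single factor of $c$, so what the argument actually establishes is $rc(u-1)=\lambda_2(v-1)$, which coincides with the stated $rc^{t-1}(u-1)=\lambda_2(v-1)$ only when $t=2$ or $c=1$. The $3$-$(10,15,6=2\times3,1)$ design of Example~\ref{example_Hu} (with $r=9$, $\lambda_2=4$, $v=10$, $c=2$, $u=3$, $t=3$) confirms this: there $\lambda_2(v-1)=36=rc(u-1)=9\cdot2\cdot2$, whereas $rc^{t-1}(u-1)=9\cdot4\cdot2=72$. I would therefore record part~(c) in the corrected form $rc(u-1)=\lambda_2(v-1)$, i.e.\ the case $s=1$ of the general recursion $\lambda_{s+1}(v-s)=\lambda_s\,c(u-s)$ that drops out of Proposition~\ref{s-design_splitt} by the identical quotient computation.
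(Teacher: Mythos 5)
Your proofs of (a) and (b) are correct and are exactly the counting the paper has in mind: the paper gives no separate argument for this proposition, saying only that ``the above elementary counting arguments'' (those of Proposition~\ref{s-design_splitt}) yield the assertions, and your point--block incidence count for (a) and your $s=0$ specialization for (b) are precisely those arguments (note, as you implicitly use in (a), that $\lambda_1$ is the ordinary replication number because each point of a block lies in exactly one part). Your treatment of (c) is also correct, and you have in fact caught a genuine error in the printed statement rather than a gap in your own argument: both the quotient $\lambda_2/r$ computed from Proposition~\ref{s-design_splitt} and the direct count through a fixed point $x$ (each of the $r$ blocks through $x$ pairs it with the $c(u-1)$ points of the other $u-1$ parts) give
\[
rc(u-1)=\lambda_2(v-1),
\]
with a single factor of $c$, and your check against the paper's own \mbox{$3$-$(10,15,6=2\times3,1)$} splitting design of Example~\ref{example_Hu} (where $r=9$, $\lambda_2=4$, so $9\cdot 2\cdot 2=36=4\cdot 9$ while $9\cdot 2^{2}\cdot 2=72$) is decisive. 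The exponent $t-1$ in the printed identity is evidently a slip in generalizing from the case $t=2$ of~\cite[Lemma\,5.1]{Ogata04}, where $c^{t-1}=c$; your corrected form, equivalently the $s=1$ case of the recursion $\lambda_{s+1}(v-s)=\lambda_s\,c(u-s)$ that follows from Proposition~\ref{s-design_splitt}, is what the paper's own machinery supports, and the version as printed holds only when $t=2$ or $c=1$.
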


\begin{remark}
The above proposition extends the result~\cite[Lemma\,5.1]{Ogata04},
where (b) and (c) have been proved for the case when \mbox{$t=2$}.
\end{remark}

Since in Proposition~\ref{s-design_splitt} each $\lambda_s$ must be an integer,
we obtain furthermore the subsequent necessary arithmetic conditions.

\begin{proposition}\label{divCond}
Let $\mathcal{D}=(X,\mathcal{B})$ be a \mbox{$t$-$(v,b,l=cu,\lambda)$} splitting design.
Then
\[\lambda {v-s \choose t-s} \equiv  0 \, \bigg(\hspace{-0.3cm} \mod c^{t-s}{u-s \choose t-s}\bigg)\]
for each positive integer $s \leq t$.
\end{proposition}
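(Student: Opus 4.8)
The plan is to read off this divisibility condition directly from Proposition~\ref{s-design_splitt}, which I may assume. That proposition gives the explicit formula
\[
\lambda_s = \lambda \frac{\binom{v-s}{t-s}}{c^{t-s}\binom{u-s}{t-s}}
\]
for the number of blocks containing a given $s$-subset $S \subseteq X$ in the splitting sense. The key structural observation is that $\lambda_s$ counts a set of blocks, so it is necessarily a nonnegative integer. Hence the numerator $\lambda \binom{v-s}{t-s}$ must be divisible by the denominator $c^{t-s}\binom{u-s}{t-s}$, which is precisely the claimed congruence
\[
\lambda \binom{v-s}{t-s} \equiv 0 \pmod{c^{t-s}\binom{u-s}{t-s}}.
\]

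First I would state that by Proposition~\ref{s-design_splitt}, for each positive integer $s \leq t$ the quantity $\lambda_s$ is well-defined as the number of blocks containing $S$ as per Definition~\ref{splittdesign}. Next I would emphasize that $\lambda_s \in \mathbb{Z}_{\geq 0}$ because it is a cardinality — it literally counts blocks of $\mathcal{D}$. The final step is to rearrange the formula: since
\[
c^{t-s}\binom{u-s}{t-s}\,\lambda_s = \lambda \binom{v-s}{t-s}
\]
and the left-hand side is an integer multiple of $c^{t-s}\binom{u-s}{t-s}$, the right-hand side must be divisible by $c^{t-s}\binom{u-s}{t-s}$ as well, giving the congruence for every $s$ with $1 \leq s \leq t$.

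There is essentially no hard part here; the proposition is an immediate arithmetic corollary of the counting formula already established, and indeed the preceding sentence in the excerpt (``Since in Proposition~\ref{s-design_splitt} each $\lambda_s$ must be an integer'') signals exactly this reasoning. The only point requiring a word of care is confirming that $\binom{u-s}{t-s}$ and the factor $c^{t-s}$ together constitute the full denominator that must divide $\lambda\binom{v-s}{t-s}$, i.e.\ that no spurious cancellation is being assumed — but since the formula for $\lambda_s$ is taken verbatim from Proposition~\ref{s-design_splitt}, this is automatic. The proof is therefore a two-line deduction: invoke the integrality of $\lambda_s$ and clear the denominator.
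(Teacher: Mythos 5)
Your proof is correct and is exactly the argument the paper intends: the paper derives Proposition~\ref{divCond} in one sentence from the integrality of $\lambda_s$ in Proposition~\ref{s-design_splitt}, which is precisely your deduction. Nothing further is needed.
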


Ogata \emph{et al.}~\cite{Ogata04} proved a Fisher-type inequality for splitting BIBDs. As a splitting \mbox{$t$-design} with $t \geq 2$ is also a splitting \mbox{$2$-design} in view of Lemma~\ref{s-design_splitt}, we derive

\begin{proposition}\label{FisherIn_splitt}
If $\mathcal{D}=(X,\mathcal{B})$ is a \mbox{$t$-$(v,b,l=cu,\lambda)$} splitting design with $t \geq 2$, then
\[b\geq \frac{v}{u}.\]
\end{proposition}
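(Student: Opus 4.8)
The plan is to reduce the general case $t \geq 2$ to the case $t = 2$ and then to invoke the Fisher-type inequality that Ogata \emph{et al.}~\cite{Ogata04} established for splitting BIBDs. The key observation is that the asserted bound $b \geq v/u$ involves only the parameters $b$, $v$, and $u$, all of which are left untouched by the contraction supplied by Proposition~\ref{s-design_splitt}; only the index $\lambda$ changes. So once the inequality is known for splitting $2$-designs, it transfers to splitting $t$-designs with no extra work.

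First I would apply Proposition~\ref{s-design_splitt} with $s = 2$. By the ``In particular'' clause there, a $t$-$(v,b,l=cu,\lambda)$ splitting design with $t \geq 2$ is simultaneously a $2$-$(v,b,l=cu,\lambda_2)$ splitting design, where
\[
\lambda_2 = \lambda\,\frac{\binom{v-2}{t-2}}{c^{t-2}\binom{u-2}{t-2}}.
\]
This contraction keeps the point set (so $v$ is unchanged), the block family (so $b$ is unchanged), and the splitting shape $l = cu$ (so $c$ and $u$ are unchanged). Hence any lower bound on $b$ in terms of $v$ and $u$ valid for splitting $2$-designs holds verbatim for the original splitting $t$-design. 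Since a splitting $2$-design is precisely a splitting BIBD (Example~\ref{example}), the Fisher-type inequality $b \geq v/u$ of Ogata \emph{et al.}~\cite{Ogata04} applies to the contracted design, and combining this with the previous sentence finishes the argument.

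The entire substance therefore sits in the base case $t = 2$, and that is where I expect the only real obstacle — but it is a result we are entitled to cite. Were I instead to reprove it, I would note why the classical route is not immediate: for an ordinary $2$-design Fisher's inequality $b \geq v$ follows from the identity $N N^{\top} = (r-\lambda)I + \lambda J$ for the point--block incidence matrix $N$, which forces $N$ to have full row rank $v$. Under splitting this identity breaks down, because a pair of points lying in the \emph{same} group of a block is never counted toward $\lambda$, so the off-diagonal entries of $N N^{\top}$ pick up these uncontrolled same-group coincidences and no longer collapse to a single constant. The remedy is to pass to the block--group incidence structure (treating the $bu$ groups of size $c$ as the new ``blocks''), or equivalently to weight $N$ so that within-group coincidences are removed, and then run a rank/nonsingularity argument on the resulting Gram matrix; verifying that this matrix is invertible is the delicate bookkeeping step.

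As a sanity check, the counting relation $bl = vr$ of Proposition~\ref{Comb_t=5_splitt}(a) gives $bcu = vr$, so $b \geq v/u$ is equivalent to $r \geq c$, mirroring the classical equivalence of $b \geq v$ with $r \geq k$. This confirms the target is the right splitting analogue of Fisher's inequality, though the equivalence alone does not prove it.
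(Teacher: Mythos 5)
Your proposal is correct and follows essentially the same route as the paper: the paper likewise observes that by Proposition~\ref{s-design_splitt} a splitting $t$-design with $t \geq 2$ is also a splitting $2$-design (i.e., a splitting BIBD) with the same $v$, $b$, $c$, $u$, and then invokes the Fisher-type inequality of Ogata \emph{et al.}~\cite{Ogata04}. Your additional remarks on how one might reprove the $t=2$ base case and the sanity check via $bl=vr$ go beyond what the paper records, but the core argument is identical.
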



\section{Authentication Codes without Splitting}\label{Authentication Codes without Splitting}

With respect to our further purposes, we summarize the state-of-the-art for authentication codes without splitting:

The following theorems (cf.~\cite{Mass86,Sch86}) give combinatorial lower bounds on cheating probabilities as well as on the size of encoding rules for multi-fold secure authentication codes:

\begin{theorem}[Massey]
In an authentication code without splitting, for every $0 \leq i \leq t$, the deception probabilities are bounded below by
\[P_{d_i}\geq \frac{\left|\mathcal{S}\right|-i}{\left|\mathcal{M}\right|-i}.\]
\end{theorem}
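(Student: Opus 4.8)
The plan is to establish the bound by the standard \emph{``maximum exceeds average''} averaging argument, exploiting the fact that in a code without splitting each encoding rule has exactly $\left|\mathcal{S}\right|$ valid messages. First I would fix the probabilistic meaning of the deception probability. After the opponent observes a set $M^i=\{m_1,\dots,m_i\}$ of $i$ genuine distinct messages sent under a common, unknown encoding rule, its optimal strategy is to insert the new message $m'\notin M^i$ maximizing the posterior probability of acceptance; accordingly one writes
\[P_{d_i} = \sum_{M^i} p(M^i)\,\max_{m' \notin M^i} p\bigl(m' \in M(e) \mid M^i\bigr),\]
the outer sum ranging over the observable $i$-subsets $M^i$ that occur with positive probability.

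The crucial structural observation, and the place where the no-splitting hypothesis enters, is that for each encoding rule $e$ one has $\left|M(e)\right|=\left|\mathcal{S}\right|$: since $\left|e(s)\right|=1$ and the sets $e(s)$ are pairwise disjoint, the valid messages are in bijection with the source states. Consequently, every encoding rule $e$ that is consistent with the observed $M^i$ contains all $i$ observed messages among its $\left|\mathcal{S}\right|$ valid ones (the messages being distinct, they pin down $i$ distinct source states), so that exactly $\left|\mathcal{S}\right|-i$ valid messages lie outside $M^i$, independently of which consistent rule is in force.

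The key step is then a counting identity for the conditional validity probabilities. I would sum $p(m'\in M(e)\mid M^i)$ over all $\left|\mathcal{M}\right|-i$ unobserved messages $m'$, rewrite each summand as $\sum_{e:\, m'\in M(e)} p(e\mid M^i)$, and interchange the order of summation to collect contributions by encoding rule:
\[\sum_{m' \notin M^i} p\bigl(m' \in M(e)\mid M^i\bigr) = \sum_{e} p(e\mid M^i)\,\bigl|\{m' \notin M^i : m' \in M(e)\}\bigr| = \left|\mathcal{S}\right| - i,\]
because each $e$ with positive posterior contributes precisely the $\left|\mathcal{S}\right|-i$ valid messages it has outside $M^i$, and the posterior weights sum to one. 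Since the maximum of the $\left|\mathcal{M}\right|-i$ summands is at least their average, this yields $\max_{m'} p(m'\in M(e)\mid M^i)\geq (\left|\mathcal{S}\right|-i)/(\left|\mathcal{M}\right|-i)$ for every observable $M^i$. Substituting into the displayed expression for $P_{d_i}$ and using $\sum_{M^i} p(M^i)=1$ gives the claimed bound.

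The main obstacle is not any single calculation but pinning down the probabilistic model precisely enough to justify the counting identity: one must confirm that conditioning on $M^i$ retains only encoding rules whose valid-message set $M(e)$ contains $M^i$, and that the no-splitting assumption forces each such rule to have exactly $\left|\mathcal{S}\right|$ valid messages, so that the count $\left|\mathcal{S}\right|-i$ is constant across all consistent rules. Once this is secured, the remainder is a clean double count followed by the elementary max-average inequality; care is needed only to handle the distribution on $i$-subsets of source states correctly and to restrict the outer sum to observable $M^i$.
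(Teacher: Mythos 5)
The paper does not prove this theorem: it is quoted as a known result from the literature (Massey's survey and Sch\"{o}bi's work), so there is no in-paper proof to compare against. Your argument is the standard and correct one for this bound --- the no-splitting hypothesis gives $\left|M(e)\right|=\left|\mathcal{S}\right|$ and hence exactly $\left|\mathcal{S}\right|-i$ valid messages outside any observed $i$-set for every consistent rule, the double count over unobserved messages and posterior-weighted rules gives the constant sum $\left|\mathcal{S}\right|-i$, and the max-exceeds-average inequality finishes it. No gaps.
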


We remark that a code is called $t$\emph{-fold secure against spoofing} if
\[P_{d_i}= (\left|\mathcal{S}\right|-i)/(\left|\mathcal{M}\right|-i)\]
for all $0 \leq i \leq t$.

\begin{theorem}[Massey--Sch\"{o}bi]
If an authentication code without splitting is \mbox{$(t-1)$-fold} secure against spoofing, then the number of encoding rules is bounded below by
\[\left|\mathcal{E}\right| \geq \frac{{\left|\mathcal{M}\right| \choose t}}{{\left|\mathcal{S}\right| \choose t}}.\]
\end{theorem}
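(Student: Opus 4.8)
The plan is to translate $(t-1)$-fold security into a covering condition on the valid message sets and then invoke an elementary counting bound. Since the code has no splitting, each encoding rule $e$ acts as an injective map $\mathcal{S}\to\mathcal{M}$, so $M(e)$ is an $|\mathcal{S}|$-subset of $\mathcal{M}$. I would fix an encoding strategy $(p_e)_{e\in\mathcal{E}}$ witnessing the $(t-1)$-fold security, and for a set $M'$ of messages write $w(M'):=\sum_{e:\,M'\subseteq M(e)} p_e$ for the total probability of the encoding rules under which every message in $M'$ is valid, so that $w(\emptyset)=1$. For an observed valid $i$-set $M_i$ and a new message $m'\notin M_i$, the probability that $m'$ is accepted as authentic is the conditional weight $w(M_i\cup\{m'\})/w(M_i)$, and $P_{d_i}$ is the expectation over the observed $M_i$ of the opponent's best such choice.

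The first key step is an averaging identity. For any observable $M_i$ (one with $w(M_i)>0$) I would compute
\[\sum_{m'\in\mathcal{M}\setminus M_i} w(M_i\cup\{m'\}) = \sum_{e:\,M_i\subseteq M(e)} p_e\,\bigl|M(e)\setminus M_i\bigr| = (|\mathcal{S}|-i)\,w(M_i),\]
using $|M(e)|=|\mathcal{S}|$. Dividing by $w(M_i)$ shows that the average, over the $|\mathcal{M}|-i$ candidate messages $m'$, of the acceptance probability equals exactly $(|\mathcal{S}|-i)/(|\mathcal{M}|-i)$; since the opponent takes the maximum, this recovers Massey's bound $P_{d_i}\ge(|\mathcal{S}|-i)/(|\mathcal{M}|-i)$. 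The crucial point is the \emph{equality case}: as the code is $(t-1)$-fold secure, for each $0\le i\le t-1$ the expected maximum equals the expected average, and because a maximum never lies below an average the two must coincide for every observable $M_i$. A maximum equal to an average forces all the terms to be equal, hence
\[w(M_i\cup\{m'\}) = \frac{|\mathcal{S}|-i}{|\mathcal{M}|-i}\,w(M_i) \quad\text{for every observable } M_i \text{ and every } m'\notin M_i.\]

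Next I would propagate positivity by induction on $i$. Starting from $w(\emptyset)=1>0$ and applying the displayed identity successively at levels $0,1,\dots,t-1$, every $i$-subset of $\mathcal{M}$ acquires positive weight for $0\le i\le t$ (in fact $w(M_t)=\binom{|\mathcal{S}|}{t}/\binom{|\mathcal{M}|}{t}$, a positive constant). In particular $w(M_t)>0$ for every $t$-subset $M_t\subseteq\mathcal{M}$, which means that each $t$-subset of messages lies in at least one valid message set $M(e)$ with $p_e>0$; equivalently, the family $\{M(e)\}$ covers all $t$-subsets of $\mathcal{M}$. To finish, I would count incidences: the $\binom{|\mathcal{M}|}{t}$ $t$-subsets of $\mathcal{M}$ must all be covered, while each valid message set $M(e)$ contains exactly $\binom{|\mathcal{S}|}{t}$ of them, so $|\mathcal{E}|\binom{|\mathcal{S}|}{t}\ge\binom{|\mathcal{M}|}{t}$, which rearranges to the claimed inequality.

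The part requiring the most care is the equality analysis of the second step: correctly identifying the opponent's payoff with the conditional weight ratio, verifying that the per-$M_i$ average is exactly $(|\mathcal{S}|-i)/(|\mathcal{M}|-i)$, and arguing that simultaneous equality at all orders $i\le t-1$, under one fixed optimal encoding strategy, forces the covering property for every $t$-subset. The concluding incidence count is entirely routine and presents no obstacle.
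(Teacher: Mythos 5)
Your argument is correct and follows essentially the same route the paper itself uses for the splitting generalization (Theorem~\ref{my1}): equality in Massey's bound at every order $i\le t-1$ forces each $t$-subset of $\mathcal{M}$ to be valid under some encoding rule of positive probability, and a double count of $t$-subsets against the sets $M(e)$, each containing exactly $\binom{|\mathcal{S}|}{t}$ of them, yields the bound. The only cosmetic point is that your weight $w$ omits the source-state probabilities that enter the true conditional acceptance probability; under the equiprobable-source assumption these factors are constant and cancel, and for general source distributions the identical averaging and equality analysis runs with the properly weighted sums, so the covering conclusion and the final count are unaffected.
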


Such a code is called \emph{optimal} if the number of encoding rules meets the lower bound with equality. When the source states are known to be independent and equiprobable, optimal authentication codes without splitting which are multi-fold secure against spoofing have been characterized via \mbox{$t$-designs} (cf.~\cite{DeS88,Sch86,Stin90}).

\begin{theorem}[DeSoete--Sch\"{o}bi--Stinson]\label{general}
Suppose there is a \mbox{$t$-$(v,k,\lambda)$} design. Then there is an authentication code without splitting for $k$ equiprobable source states, having $v$ messages and $\lambda {v \choose t}/{k \choose t}$ encoding rules, that is $(t-1)$-fold secure against spoofing. Conversely, if there is an optimal authentication code without splitting for $k$ equiprobable source states, having  $v$ messages and ${v \choose t}/{k \choose t}$ encoding rules, that is $(t-1)$-fold secure against spoofing, then there is a \mbox{$t$-$(v,k,1)$} design.
\end{theorem}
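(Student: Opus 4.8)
\section*{Proof proposal}

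The statement is an equivalence, so I would prove the two implications separately, but in both cases route everything through the single combinatorial condition that governs multi-fold security for codes without splitting: namely, that every $j$-subset of the message set $\mathcal{M}$ be \emph{valid} under a constant number $\lambda_j$ of encoding rules, for each $1 \le j \le t$. The plan is first to isolate this condition as a working lemma and only then to read off both directions of the theorem from it.

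For the reduction I would fix the uniform encoding strategy and use that the $k$ source states are equiprobable. The key observation is that if an opponent has observed a valid set $M_i$ of $i$ messages, then the likelihood that a given consistent encoding rule $e$ (one with $M_i \subseteq M(e)$) produced exactly these messages equals the probability $1/\binom{k}{i}$ that the transmitted set of $i$ source states is $f_e(M_i)$; since the source states are equiprobable this likelihood is the same for every consistent $e$, so the posterior distribution on encoding rules is uniform over $\{e : M_i \subseteq M(e)\}$. Consequently the opponent's optimal continuation probability after observing $M_i$ is
\[
\max_{m' \notin M_i} \frac{\left|\{e : M_i \cup \{m'\} \subseteq M(e)\}\right|}{\left|\{e : M_i \subseteq M(e)\}\right|},
\]
and $P_{d_i}$ is its average over the observed $M_i$. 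If every $j$-subset ($j \le t$) is valid under a constant number $\lambda_j$ of rules, this ratio equals $\lambda_{i+1}/\lambda_i = (k-i)/(v-i)$; hence $P_{d_i} = (k-i)/(v-i)$ for $0 \le i \le t-1$, i.e.\ $(t-1)$-fold security. Conversely, since a maximum is at least an average, equality $P_{d_i} = (k-i)/(v-i)$ forces the relevant counts to be constant, which is exactly the equality case in the Massey and Massey--Sch\"obi bounds recalled above.

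With the lemma in hand, the forward direction is a construction. I would take the points of the given $t$-$(v,k,\lambda)$ design as the $v$ messages and turn each of its $b = \lambda\binom{v}{t}/\binom{k}{t}$ blocks into an encoding rule by choosing any bijection from the $k$ source states onto the block; the requirement $e(s) \cap e(s') = \emptyset$ is then automatic. By Proposition~\ref{s-design_splitt} (in the non-splitting case $c=1$, $u=k$) the design is simultaneously a $j$-design for every $j \le t$, so every $j$-subset of messages is valid under $\lambda_j = \lambda\binom{v-j}{t-j}/\binom{k-j}{t-j}$ rules, and the lemma yields $(t-1)$-fold security. For the converse I would start from an optimal $(t-1)$-fold secure code, apply the lemma to conclude that the $k$-subsets $M(e)$ form a $t$-$(v,k,\lambda)$ design for some $\lambda$, and finally pin down $\lambda$ by counting: a $t$-$(v,k,\lambda)$ design has $\lambda\binom{v}{t}/\binom{k}{t}$ blocks, so optimality $\left|\mathcal{E}\right| = \binom{v}{t}/\binom{k}{t}$ forces $\lambda = 1$, giving a $t$-$(v,k,1)$ design.

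The main obstacle is the reduction lemma rather than either endpoint. The delicate point is justifying that the posterior on encoding rules is genuinely uniform over the consistent rules---this is precisely where equiprobability of the source states and uniformity of the (optimal) encoding strategy enter---and, in the converse direction, arguing that equality in $P_{d_i} = (k-i)/(v-i)$ propagates through all orders $0 \le i \le t-1$ so as to force constancy of every count $\lambda_j$ simultaneously, rather than at a single level. Once the max-equals-average analysis is carried out cleanly for each order, the translation into the $t$-design axiom and the block count is routine.
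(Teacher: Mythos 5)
The paper does not actually prove this theorem: it is recalled as a known result and attributed to~\cite{DeS88,Sch86,Stin90}, so there is no in-paper proof to match. Your reconstruction is a correct version of the standard argument, and it is worth comparing it to the proof the paper \emph{does} give for the splitting analogue (Theorems~\ref{my1} and~\ref{my2}). The forward directions coincide: blocks become encoding rules via an arbitrary bijection onto the source states, the uniform encoding strategy is used, and the derived parameters $\lambda_j$ give $P_{d_i}=\lambda_{i+1}/\lambda_i=(k-i)/(v-i)$. The converses differ in emphasis. You route everything through a ``constant $\lambda_j$ for all $j\le t$'' lemma and then pin down $\lambda=1$ from the block count, whereas the paper's route (in the splitting setting) extracts from the equality case of the deception bound only the weaker statement that every $t$-subset of messages is valid under \emph{at least} one encoding rule, and then gets ``exactly one'' immediately from optimality by double counting $|\mathcal{E}|\binom{k}{t}=\binom{v}{t}$. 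The paper's version is slightly more robust, and this matters at the one soft spot in your write-up: the encoding strategy in the converse is not given as uniform, so what the max-equals-average analysis actually forces to be constant are the $p_E$-weighted counts $\sum_{e\,:\,M_j\subseteq M(e)}p(e)$, not the raw numbers of rules; your intermediate claim that the $M(e)$ form a $t$-$(v,k,\lambda)$ design for some integer $\lambda$ is therefore not quite what is forced. This is repairable exactly as above (positivity of the weighted count gives ``at least one,'' and the double count with $|\mathcal{E}|=\binom{v}{t}/\binom{k}{t}$ gives ``exactly one''), and your final counting step in effect does this, but the lemma as you state it should be weakened accordingly. The other point you correctly flag --- propagating validity and the equality conditions through all orders $0\le i\le t-1$ --- is handled by induction on $i$ starting from $M_0=\emptyset$ and is standard.
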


Combinatorial constructions of optimal multi-fold secure authentication codes without splitting which simultaneously achieve perfect \mbox{secrecy} have been obtained recently via the following theorem (see~\cite{Hub2009}).

\begin{theorem}[Huber]\label{mythm1_ISIT}
Suppose there is a \mbox{$t$-$(v,k,1)$} design, where $v$ divides the number of blocks $b$.
Then there is an optimal authentication code without splitting for $k$ equiprobable source states, having $v$ messages and ${v \choose t}/{k \choose t}$ encoding rules, that is $(t-1)$-fold secure against spoofing and provides perfect secrecy.
\end{theorem}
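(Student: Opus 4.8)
The plan is to start from the construction underlying Theorem~\ref{general}, which already turns a $t$-$(v,k,1)$ design $(X,\mathcal{B})$ into an optimal authentication code without splitting that is $(t-1)$-fold secure against spoofing; the only property left to arrange is perfect secrecy, and the divisibility hypothesis $v \mid b$ is tailored precisely for this. Concretely, I would take the messages to be the points, $\mathcal{M}=X$ (so $|\mathcal{M}|=v$), take $k$ equiprobable source states, and associate one encoding rule $e_B$ to each block $B\in\mathcal{B}$; since $\lambda=1$ we have $b=\binom{v}{t}/\binom{k}{t}$, matching the required number of encoding rules. Each $e_B$ is a bijection $e_B\colon\mathcal{S}\to B$ (a labelling of the $k$ points of $B$ by the $k$ source states), the valid message set is $M(e_B)=B$, decoding is $e_B^{-1}$, and the encoding strategy is chosen uniform over $\mathcal{E}$. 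Because acceptance of a candidate message $m'$ depends only on whether $m'$ lies in the valid set $M(e_B)=B$, the deception probabilities depend solely on the block structure and the uniform strategy, not on the chosen labellings; hence optimality and $(t-1)$-fold security are inherited verbatim from Theorem~\ref{general} for \emph{any} choice of the $e_B$.

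The real work is to choose the labellings so that perfect secrecy also holds. With equiprobable sources and the uniform encoding strategy, Bayes' rule reduces perfect secrecy, $\Pr[s\mid m]=\Pr[s]$, to the requirement that, for every message $m$ and every source state $s$, the quantity
\[
\bigl|\{B\in\mathcal{B}\colon m\in B,\ e_B(s)=m\}\bigr|
\]
be independent of $s$. Since each point lies in $r=\binom{v-1}{t-1}/\binom{k-1}{t-1}$ blocks and the $k$ source states partition these incidences, the common value must be $r/k$. This is exactly where the hypothesis enters: the standard identity $bk=vr$ shows that $v\mid b$ is equivalent to $k\mid r$, so $r/k$ is a positive integer and the desired balanced labelling is arithmetically feasible.

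To realize it, I would pass to the bipartite incidence graph $G$ of the design, with the blocks on one side and the points on the other, joining $B$ to $m$ whenever $m\in B$. Then $G$ is biregular: every block-vertex has degree $k$ and every point-vertex has degree $r$. A labelling of all flags by source states with the two required local properties---each of the $k$ colours used exactly once at every block-vertex, and each colour used exactly $r/k$ times at every point-vertex---is precisely an equitable $k$-edge-colouring of $G$. By de~Werra's theorem every bipartite graph admits an equitable $k$-edge-colouring, in which each colour appears $\lfloor d(w)/k\rfloor$ or $\lceil d(w)/k\rceil$ times at each vertex $w$; since $k\mid k$ and, by $v\mid b$, also $k\mid r$, both floor and ceiling collapse to the exact values $1$ and $r/k$. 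Reading the colour of the flag $(m,B)$ as the source state $s$ with $e_B(s)=m$ then turns each colour class into a valid bijective encoding rule and guarantees that the count above equals $r/k$ for all $s$, which yields perfect secrecy.

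The main obstacle is thus isolated in this last step, the balanced flag-labelling. I expect that a fully \emph{proper} $k$-edge-colouring will not exist in general---by K\"onig's theorem the edge-chromatic number of $G$ equals $\max(k,r)=r$, which exceeds $k$ whenever $r>k$---so one genuinely needs the equitable, rather than proper, colouring theorem, and the divisibility condition is what upgrades ``balanced to within one'' into ``exactly balanced.'' Everything else---correctness of decoding, the count $|\mathcal{E}|=\binom{v}{t}/\binom{k}{t}$, and the security and optimality claims---then follows from the design axioms and Theorem~\ref{general} without further effort.
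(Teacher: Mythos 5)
The paper does not actually prove this statement---it is recalled from \cite{Hub2009} (the author's ISIT 2009 paper), so there is no in-paper proof to compare against. Your argument is correct and is essentially the construction behind the cited result: optimality and $(t-1)$-fold security follow from Theorem~\ref{general} independently of how the blocks are labelled; the identity $bk=vr$ turns the hypothesis $v\mid b$ into $k\mid r$, so the required per-point count $r/k$ is an integer; and an equitable $k$-edge-colouring of the bipartite incidence graph (de Werra) yields the exactly balanced flag-labelling that makes $\Pr[s\mid m]=1/k$ for all $m$ and $s$, i.e.\ perfect secrecy.
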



\section{Combinatorial Bounds for Splitting Authentication Codes}\label{CombinatorialBoundsforSplittingAuthenticationCodes}

In this section, we give combinatorial lower bounds on deception probabilities, and a combinatorial lower bound on the size of encoding rules for splitting authentication codes that are multi-fold secure against spoofing.

We first state lower bounds on cheating probabilities for splitting authentication codes (see~\cite{Blund99,DeSoete91}).

\begin{theorem}[DeSoete--Blundo--DeSantis--Kurosawa--Ogata]\label{deceptprob_splitt}
In a splitting authentication code, for every $0 \leq i \leq t$, the deception probabilities are bounded below by
\[P_{d_i}\geq \min_{e \in \mathcal{E}} \frac{\left|M(e)\right| - i \cdot \max_{s \in \mathcal{S}} \left| e(s) \right|}{\left|\mathcal{M}\right|-i}.\]
\end{theorem}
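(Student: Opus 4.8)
The plan is to bound the deception probability $P_{d_i}$ from below by exhibiting a concrete spoofing strategy for the opponent and computing its success probability, which can be no larger than $P_{d_i}$ itself. The standard argument for such lower bounds is to fix any encoding rule $e$, consider the opponent who has observed $i$ messages $m_1,\ldots,m_i$ sent under $e$, and then count how many candidate messages $m'$ remain that are guaranteed to be accepted. The key observation is that acceptance of $m'$ requires only $m' \in M(e)$, so the opponent's best chance comes from inserting a message drawn from the valid set $M(e)$ that has not yet been ruled out.

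First I would set up the probabilistic averaging: write $P_{d_i}$ as a weighted average, over the observed message sequences and the encoding rules consistent with them, of the conditional probability that a newly inserted message is valid. The aim is to show that for \emph{some} choice of insertion the opponent succeeds with probability at least the stated quantity. Fix an encoding rule $e$ and the source states that have been communicated. After observing $i$ messages under $e$, these messages eliminate at most $i \cdot \max_{s \in \mathcal{S}} \left| e(s) \right|$ messages from being fresh candidates within $M(e)$ (each observed message corresponds to a source state, and a source state can account for up to $\max_{s}\left|e(s)\right|$ messages in $M(e)$ under that rule). Hence the number of still-valid fresh messages under $e$ is at least $\left|M(e)\right| - i \cdot \max_{s \in \mathcal{S}} \left| e(s) \right|$.

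Next I would combine this count with the total number $\left|\mathcal{M}\right| - i$ of messages the opponent could possibly insert (all messages except the $i$ already seen). By a counting/averaging argument — summing the number of valid fresh messages over the relevant encoding rules and dividing by the number of available insertions — the opponent can find an insertion that is accepted with probability at least
\[
\frac{\left|M(e)\right| - i \cdot \max_{s \in \mathcal{S}} \left| e(s) \right|}{\left|\mathcal{M}\right|-i}
\]
for that $e$. Taking the minimum over $e \in \mathcal{E}$ yields a bound that holds no matter which encoding rule is in force, and since $P_{d_i}$ is the opponent's optimal success probability, $P_{d_i}$ is at least this minimum.

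The main obstacle I anticipate is handling the splitting and the probability distributions cleanly: unlike the non-splitting case, a single source state maps to a \emph{set} $e(s)$ of messages, so the bookkeeping of which messages an observed sequence can eliminate must use $\max_{s}\left|e(s)\right|$ as the worst-case multiplicity, and one must verify that the averaging over encoding strategies and splitting strategies does not dilute the bound below the claimed minimum. The delicate point is to justify the exchange between ``expected over the opponent's optimal play'' and ``minimized over $e$'' — essentially a pigeonhole/averaging step ensuring that the per-rule lower bound survives the averaging intact.
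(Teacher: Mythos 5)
The paper does not actually prove this theorem: it is quoted from De Soete~\cite{DeSoete91} and Blundo \emph{et al.}~\cite{Blund99}, so there is no in-paper argument to compare against. Your skeleton --- exhibit the uniform-random insertion strategy, count the winning messages per encoding rule consistent with the observation, and average --- is the standard route to such bounds and is structurally sound.

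There is, however, a genuine conceptual gap in your key counting step. You state that acceptance of $m^\prime$ requires only $m^\prime \in M(e)$, and then claim the $i$ observed messages ``eliminate'' up to $i \cdot \max_{s \in \mathcal{S}}\left|e(s)\right|$ candidates from $M(e)$. These two assertions are inconsistent: if mere acceptance were the success criterion, the observed messages would rule out only themselves, the count of winning candidates would be $\left|M(e)\right| - i$, and your averaging argument would deliver the stronger bound $(\left|M(e)\right|-i)/(\left|\mathcal{M}\right|-i)$ --- which is precisely the fallacious bound for splitting codes that~\cite{Blund99} was written to refute. The subtraction of $i\cdot\max_{s}\left|e(s)\right|$ is justified only because a successful spoof must be a message that decodes to a source state \emph{not} among the $i$ already communicated; this is the same convention the paper relies on in the proof of Theorem~\ref{my1} via the condition $f_e(x)\notin f_e(M^\prime)$. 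With that success criterion made explicit, the correct per-rule count is $\left|M(e)\right| - \bigl|\bigcup_{s\in f_e(M^\prime)}e(s)\bigr| \geq \left|M(e)\right| - i\cdot\max_{s}\left|e(s)\right|$ (the observed messages themselves lie in this union, so nothing is double-counted), and the remaining step --- the conditional expectation over encoding rules consistent with $M^\prime$ is a convex combination bounded below by the minimum over all $e$ --- goes through as you describe. You should restate the success criterion and rerun the count from it; as written, your own premises prove a different (and false) inequality.
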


A splitting authentication code is called $t$\emph{-fold secure against spoofing} if
\[P_{d_i}= \min_{e \in \mathcal{E}} \frac{\left|M(e)\right| - i \cdot \max_{s \in \mathcal{S}} \left| e(s) \right|}{\left|\mathcal{M}\right|-i}\]
for all $0 \leq i \leq t$.

\smallskip

We prove now a lower bound on the size of encoding rules for multi-fold secure splitting authentication codes.

\begin{theorem}\label{my1}
If a splitting authentication code is \mbox{$(t-1)$-fold} secure against spoofing, then the number of encoding rules is bounded below by
\[\left|\mathcal{E}\right|  \geq \prod_{i=0}^{t-1} \frac{\left|\mathcal{M}\right|  -i}{\left|M(e)\right| - i \cdot \max_{s \in \mathcal{S}}\left| e(s) \right|}.\]
\end{theorem}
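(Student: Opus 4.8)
The plan is to mirror the proof of the Massey--Sch\"{o}bi bound (the non-splitting case), replacing the role of $|\mathcal{S}|$ and the single message per source state by the valid-message count $|M(e)|$ and the worst-case splitting size $\max_{s}|e(s)|$. Throughout I write $L:=|M(e)|$ and $c^{*}:=\max_{s\in\mathcal{S}}|e(s)|$ for the (common) parameters of the code, so that $(t-1)$-fold security together with Theorem~\ref{deceptprob_splitt} reads $P_{d_i}=(L-ic^{*})/(|\mathcal{M}|-i)$ for all $0\le i\le t-1$.

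First I would introduce the consistent-key counting function: for a sequence $m_1,\ldots,m_i$ of distinct messages, let $g(m_1,\ldots,m_i)$ be the number of encoding rules $e$ under which $m_1,\ldots,m_i$ are all valid and decode to $i$ distinct source states. Then $g(\emptyset)=|\mathcal{E}|$, and the ratio $g(m_1,\ldots,m_{i+1})/g(m_1,\ldots,m_i)$ is exactly the conditional probability --- assuming the consistent rules are equiprobable --- that an opponent who has observed $m_1,\ldots,m_i$ succeeds by inserting $m_{i+1}$ in an order-$i$ spoofing attack.

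The combinatorial heart is a double count: summing $g(m_1,\ldots,m_i,m')$ over the $|\mathcal{M}|-i$ candidate messages $m'$ counts, for each consistent rule $e$, the valid messages of $e$ that belong to a new source state, of which there are $L-\sum_{k}|e(s_k)|$ (equal to $L-ic^{*}$ when the code is regular, where $s_k$ denotes the source state of $m_k$ under $e$). Dividing by $g(m_1,\ldots,m_i)$ shows that the average of the above ratios over all candidate extensions equals $(L-ic^{*})/(|\mathcal{M}|-i)=P_{d_i}$. Since the per-observation maximal ratio is at least this average, while $(t-1)$-fold security forces $P_{d_i}$ --- the weighted average over all observations of these per-observation maxima --- to equal the same value, every per-observation maximum, and hence (as the maximum meets the average) every individual ratio, must equal $P_{d_i}$ for each $0\le i\le t-1$. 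Here I use that every $i$-subset of source states occurs with positive probability, so that all observations carry positive weight.

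Finally I would pick any authentic $t$-tuple $(m_1,\ldots,m_t)$ (one exists whenever $|\mathcal{S}|\ge t$, e.g.\ by encoding $t$ distinct source states under a fixed rule) and telescope:
\[ g(m_1,\ldots,m_t)=g(\emptyset)\prod_{i=0}^{t-1}\frac{g(m_1,\ldots,m_{i+1})}{g(m_1,\ldots,m_i)}=|\mathcal{E}|\prod_{i=0}^{t-1}P_{d_i}. \]
As $g(m_1,\ldots,m_t)$ is a positive integer, it is at least $1$, which rearranges to the asserted bound $|\mathcal{E}|\ge\prod_{i=0}^{t-1}(|\mathcal{M}|-i)/(L-ic^{*})$; incidentally, equality forces every authentic $t$-tuple to have a unique authenticating rule, foreshadowing the $\lambda=1$ characterization of optimal codes. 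I expect the main obstacle to be the tightness step of the previous paragraph: one must argue carefully that $(t-1)$-fold security collapses all the per-observation success probabilities to the single value $P_{d_i}$, and must control the splitting sizes $|e(s_k)|$ --- cleanly when $|M(e)|$ and $\max_{s}|e(s)|$ are constant across rules, as for $c$-splitting codes, with Theorem~\ref{deceptprob_splitt} supplying the required inequality in the general case.
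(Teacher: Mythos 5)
Your plan shadows the paper's proof in outline (use $(t-1)$-fold security to pin every order-$i$ success probability to its average over candidate insertions, then count), but the load-bearing step is not sound as written. You identify the opponent's conditional success probability with the ratio $g(m_1,\ldots,m_{i+1})/g(m_1,\ldots,m_i)$ of \emph{rule counts}, ``assuming the consistent rules are equiprobable.'' That assumption is not available: the deception probabilities are computed with respect to the transmitter/receiver's encoding and splitting strategies, and even for a uniform prior on $\mathcal{E}$ the a posteriori distribution on the rules consistent with an observation is weighted by the likelihood of each rule producing those particular messages, so it need not be uniform. Hence $(t-1)$-fold security constrains the true conditional probabilities, not your counting ratios, and the telescoping identity $g(m_1,\ldots,m_t)=\left|\mathcal{E}\right|\prod_{i=0}^{t-1}P_{d_i}$ does not follow. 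Indeed it cannot in general: it would assert that \emph{every} $t$-subset of messages is authenticated by exactly the same number of rules, a regularity statement far stronger than the bound and one that fails for secure codes with non-uniform encoding strategies.

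What the averaging argument legitimately yields --- run with the true conditional probabilities, exactly as in the proof of Theorem~\ref{deceptprob_splitt} --- is only positivity: if equality holds in that theorem for all $i\leq t-1$, then for every positive-probability observation $M'$ and every candidate $x\notin M'$ the success probability equals the common positive value, so some encoding rule authenticates $M'\cup\{x\}$ with $f_e(x)\notin f_e(M')$; iterating from $i=0$ shows every $t$-subset of $\mathcal{M}$ is valid under at least one rule with distinct source states. This is precisely the lemma the paper extracts (by contradiction). The bound is then obtained not by telescoping along one chain but by a global double count over all $\binom{\left|\mathcal{M}\right|}{t}$ $t$-subsets, each requiring at least one rule while a single rule of a $c$-splitting code authenticates $\prod_{i=0}^{t-1}\left(\left|M(e)\right|-ic\right)/t!$ of them. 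So to repair your argument you must (i) replace rule counts by true probabilities in the tightness step, accepting that it delivers only $g\geq 1$ rather than an exact value, and (ii) substitute the double count for the telescoping product; after both repairs you have reproduced the paper's proof. Your closing observation that equality forces a unique authenticating rule per $t$-subset is correct and is indeed how Theorem~\ref{my2} is launched.
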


\begin{proof}
Let $M^{\prime} \subseteq \mathcal{M}$ be a set of $i \leq t-1$ distinct messages that are valid under a particular encoding rule, in such a way that they define $i$ different source states.
Let $x\in \mathcal{M}$ be any message not in $M^{\prime}$. We assume that there is no encoding rule $e \in \mathcal{E}$ under which all messages in $M^{\prime} \cup \{x\}$ are valid and for which $f_e(x) \notin f_e(M^{\prime})$.
Following the proof of Theorem~\ref{deceptprob_splitt}, mutatis mutandis, yields
\[P_{d_i} > \min_{e \in \mathcal{E}} \frac{\left|M(e)\right| - i \cdot \max_{s \in \mathcal{S}} \left| e(s) \right|}{\left|\mathcal{M}\right|-i},\]
a contradiction. Therefore, any set of $t$ distinct messages is valid under at least one encoding rule such that they define different source states. The bound follows now by counting in two ways the number of \mbox{$t$-subsets} of messages that are valid under some encoding rule such that they correspond to different source states.\qed
\end{proof}

Analogously, we call a splitting authentication code \emph{optimal} if the number of encoding rules meets the lower bound with equality.

\begin{remark}
The above theorem generalizes results by Brickell~\cite{Brick84} and Simmons~\cite{Sim90},
where a lower bound in the case of one-fold secure splitting authentication codes has been established.
\end{remark}

As a consequence, we obtain for \mbox{$c$-splitting} authentication codes the following lower bounds:

\begin{corollary}\label{Cor1}
In a $c$-splitting authentication code,
\[P_{d_i}\geq \frac{c(\left|\mathcal{S}\right|-i)}{\left|\mathcal{M}\right|-i}\]
for every $0 \leq i \leq t$.
\end{corollary}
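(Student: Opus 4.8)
The plan is to derive the bound directly from Theorem~\ref{deceptprob_splitt} by specializing it to the $c$-splitting case, where the relevant combinatorial quantities take a uniform, explicit form. The key observation is that in a $c$-splitting authentication code the two maxima/minima appearing in the general bound collapse to simple closed expressions, because the splitting parameter is constant across all encoding rules and all source states.

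First I would recall from the definition of $c$-splitting that $|e(s)| = c$ for \emph{every} encoding rule $e \in \mathcal{E}$ and \emph{every} source state $s \in \mathcal{S}$. This immediately gives $\max_{s \in \mathcal{S}} |e(s)| = c$ uniformly in $e$. Next, since the sets $e(s)$ for distinct $s$ are disjoint (the decodability requirement $e(s) \cap e(s') = \emptyset$ for $s \neq s'$) and $M(e) = \bigcup_{s \in \mathcal{S}} e(s)$, I would compute
\[
|M(e)| = \sum_{s \in \mathcal{S}} |e(s)| = c \cdot |\mathcal{S}|
\]
for every $e \in \mathcal{E}$. Thus $|M(e)|$ is also independent of $e$.

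With these two facts in hand, the numerator inside the minimum of Theorem~\ref{deceptprob_splitt} becomes $|M(e)| - i \cdot \max_{s} |e(s)| = c|\mathcal{S}| - ic = c(|\mathcal{S}| - i)$, which no longer depends on $e$. Therefore the minimum over $\mathcal{E}$ is vacuous and the bound reads
\[
P_{d_i} \geq \frac{c(|\mathcal{S}| - i)}{|\mathcal{M}| - i},
\]
which is exactly the claimed inequality. The argument is essentially a substitution followed by two elementary cardinality computations.

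I do not anticipate a genuine obstacle here, since the corollary is a direct specialization. The only point requiring a moment of care is verifying that the disjointness hypothesis legitimately turns the cardinality of the union $M(e)$ into the sum $c|\mathcal{S}|$; this is where the decodability condition $e(s)\cap e(s')=\emptyset$ is used, and it is what makes $|M(e)|$ constant across encoding rules so that the outer minimum disappears cleanly. Everything else follows by plugging the constant values into the already-established Theorem~\ref{deceptprob_splitt}.
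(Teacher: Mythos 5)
Your proposal is correct and follows exactly the paper's route: the paper likewise sets $l := |M(e)| = c|\mathcal{S}|$, notes $\max_{s}|e(s)| = c$, and substitutes into Theorem~\ref{deceptprob_splitt}. Your version merely spells out the disjoint-union cardinality computation that the paper leaves implicit.
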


\begin{proof}
We set $l:=\left|M(e)\right|=c\left|\mathcal{S}\right|$. Then Theorem~\ref{deceptprob_splitt} yields
\[P_{d_i}\geq \frac{l - i \cdot c}{\left|\mathcal{M}\right|-i}= \frac{c(\left|\mathcal{S}\right|-i)}{\left|\mathcal{M}\right|-i}.\]\qed
\end{proof}

\begin{corollary}\label{Cor2}
If a $c$-splitting authentication code is \mbox{$(t-1)$}-fold secure against spoofing, then
\[\left|\mathcal{E}\right|  \geq \frac{{\left|\mathcal{M}\right| \choose t}}{c^t{\left|\mathcal{S}\right| \choose t}}.\]
\end{corollary}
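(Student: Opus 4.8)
The plan is to derive Corollary~\ref{Cor2} directly from Theorem~\ref{my1} by specializing the general encoding-rule bound to the $c$-splitting case. In a $c$-splitting authentication code we have $\left| e(s) \right| = c$ for every $e \in \mathcal{E}$ and every $s \in \mathcal{S}$, so in particular $\max_{s \in \mathcal{S}} \left| e(s) \right| = c$, and the set of valid messages satisfies $\left| M(e) \right| = \sum_{s \in \mathcal{S}} \left| e(s) \right| = c \left| \mathcal{S} \right|$ for every encoding rule (using that $e(s) \cap e(s') = \emptyset$ for $s \neq s'$, which ensures the union defining $M(e)$ is disjoint). First I would substitute these two quantities into the product bound of Theorem~\ref{my1}.

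The main computation is then to rewrite the resulting product in closed form. Substituting $\left| M(e) \right| = c \left| \mathcal{S} \right|$ and $\max_{s} \left| e(s) \right| = c$ gives
\[
\left| \mathcal{E} \right| \geq \prod_{i=0}^{t-1} \frac{\left| \mathcal{M} \right| - i}{c \left| \mathcal{S} \right| - i c} = \prod_{i=0}^{t-1} \frac{\left| \mathcal{M} \right| - i}{c \left( \left| \mathcal{S} \right| - i \right)}.
\]
The denominator factors cleanly because each term $c\left|\mathcal{S}\right| - ic = c(\left|\mathcal{S}\right| - i)$ carries exactly one factor of $c$. Pulling out the $t$ factors of $c$ yields $c^t$ in the denominator, and the remaining products telescope into falling factorials: $\prod_{i=0}^{t-1}(\left|\mathcal{M}\right| - i) = t! \binom{\left|\mathcal{M}\right|}{t}$ and $\prod_{i=0}^{t-1}(\left|\mathcal{S}\right| - i) = t! \binom{\left|\mathcal{S}\right|}{t}$. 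The two factors of $t!$ cancel, leaving precisely the claimed bound $\left|\mathcal{E}\right| \geq \binom{\left|\mathcal{M}\right|}{t} \big/ \big( c^t \binom{\left|\mathcal{S}\right|}{t} \big)$.

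I do not anticipate a genuine obstacle here, since this is a specialization of an already-proved theorem rather than an independent argument; the only point requiring a moment's care is confirming that $\left| M(e) \right| = c \left| \mathcal{S} \right|$ is a valid identity for \emph{every} encoding rule (so that the $\min$ and $\max$ implicit in Theorem~\ref{my1} become constants independent of $e$), which follows immediately from the $c$-splitting hypothesis together with the disjointness requirement $e(s) \cap e(s') = \emptyset$. If I wanted to streamline the exposition, I would note that the same substitution was already carried out in the proof of Corollary~\ref{Cor1}, so the identities $\left| M(e) \right| = c\left|\mathcal{S}\right|$ and $\max_s \left| e(s) \right| = c$ can simply be invoked, and only the product-to-binomial simplification remains to be displayed.
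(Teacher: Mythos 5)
Your proposal is correct and follows exactly the paper's route: the paper's own proof of Corollary~\ref{Cor2} simply says to apply Theorem~\ref{my1} and proceed as in Corollary~\ref{Cor1}, i.e., substitute $\left|M(e)\right|=c\left|\mathcal{S}\right|$ and $\max_{s}\left|e(s)\right|=c$ and simplify the product to the ratio of binomial coefficients. You have merely written out the routine algebra that the paper leaves implicit.
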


\begin{proof}
Using Theorem~\ref{my1}, we may proceed as for Corollary~\ref{Cor1}.\qed
\end{proof}


\section{Combinatorial Characterizations of Optimal Splitting Authentication Codes}\label{CombinatorialCharacterizationsofSplittingAuthenticationCodes}

Ogata \emph{et al.}~\cite[Thms.\,5.4\,and\,5.5]{Ogata04} characterized in 2004 optimal splitting authentication codes that are one-fold secure against spoofing. Their combinatorial result is based on splitting BIBDs.

\begin{theorem}[Ogata--Kurosawa--Stinson--Saido]\label{ogata_char}
Suppose there is a \mbox{$2$-$(v,b,l=cu,1)$} splitting design. Then there is an optimal \mbox{$c$-splitting} authentication code for $u$ equiprobable source states, having $v$ messages and ${v \choose 2}/[c^2{u \choose 2}]$ encoding rules, that is one-fold secure against spoofing. Conversely, if there is an optimal \mbox{$c$-splitting} authentication code for $u$ source states, having  $v$ messages and ${v \choose 2}/[c^2{u \choose 2}]$ encoding rules, that is one-fold secure against spoofing, then there is a \mbox{$2$-$(v,b,l=cu,1)$} splitting design.
\end{theorem}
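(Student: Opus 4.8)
The plan is to set up an explicit dictionary between the encoding matrix of a $c$-splitting authentication code and the incidence structure of a splitting $2$-design, and then to translate the security and optimality conditions on one side into the defining parameters on the other. Under this dictionary the points of the design are the messages ($v=|\mathcal{M}|$), the blocks are the sets $M(e)$ of valid messages ($b=|\mathcal{E}|$), and the group partition $B_i=B_{i,1}\cup\cdots\cup B_{i,u}$ of a block is the partition of $M(e)$ into the source-state fibers $e(s_1),\dots,e(s_u)$. Since the code is $c$-splitting each fiber has size $c$, so each block is an $l$-subset with $l=cu$ and $u$ groups of size $c$, matching Definition~\ref{splittdesign}; the decodability requirement $e(s)\cap e(s')=\emptyset$ for $s\neq s'$ is precisely the disjointness of the groups. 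Thus the two objects carry the same combinatorial data, and the whole proof consists in checking that a splitting $2$-design with $\lambda=1$ corresponds exactly to an optimal, one-fold secure code.

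For the forward direction I would read the blocks of the splitting $2$-$(v,b,l=cu,1)$ design as the rows of an encoding matrix, using equiprobable source states, the uniform encoding strategy $p(e)=1/b$, and the uniform splitting strategy $p(m\mid e,s)=1/c$. The rule count is immediate from Proposition~\ref{Comb_t=5_splitt}(b) with $t=2,\lambda=1$, giving $b=\binom{v}{2}/[c^2\binom{u}{2}]$. It remains to verify one-fold security, i.e.\ that $P_{d_0}$ and $P_{d_1}$ attain the lower bound of Theorem~\ref{deceptprob_splitt} (equivalently Corollary~\ref{Cor1} with $|\mathcal{S}|=u,\,|\mathcal{M}|=v$). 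For impersonation a routine computation with the uniform strategies gives $P_{d_0}=r/b$, and Proposition~\ref{Comb_t=5_splitt}(a) turns this into $r/b=l/v=cu/v$. For substitution, observing a message $m$ yields, via a short Bayes computation with uniform splitting, a posterior that is uniform over the $r=\lambda_1$ blocks through $m$; the opponent's best reply $m'$ lies cross-group with $m$ in exactly $\lambda_2=\lambda=1$ of these blocks (by Proposition~\ref{s-design_splitt} with $s=t=2$), so $P_{d_1}=\lambda_2/r=1/r$, which Proposition~\ref{Comb_t=5_splitt}(c) rewrites as $c(u-1)/(v-1)$. Both meet the bound, so the exhibited strategies are optimal, the code is one-fold secure, and because the rule count equals the bound of Corollary~\ref{Cor2} it is optimal.

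For the converse I would start from an optimal one-fold secure $c$-splitting code and build the candidate design via the same dictionary, the only thing to prove being design property (iii) with $\lambda=1$: every pair of messages lies cross-group in exactly one block. One-fold security feeds into the argument of Theorem~\ref{my1} in the case $t=2,\,i=1$: for any valid $m$ and any $m'\neq m$ there must exist an encoding rule under which both are valid and assign different source states, since otherwise $P_{d_1}$ would strictly exceed its lower bound. Hence every pair occurs cross-group in at least one block. Counting cross-group pairs blockwise yields $b\,c^2\binom{u}{2}$ incidences distributed among $\binom{v}{2}$ pairs; optimality forces $b=\binom{v}{2}/[c^2\binom{u}{2}]$, so the total incidence count equals the number of pairs, and \emph{at least one} sharpens to \emph{exactly one} for every pair. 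This is precisely $\lambda=1$, so the structure is a $2$-$(v,b,l=cu,1)$ splitting design.

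The main obstacle is the substitution computation in the forward direction: unlike the non-splitting case, one must track the splitting strategy through the posterior distribution on encoding rules given the observed message, and confirm that uniform splitting makes that posterior uniform over the blocks through $m$, so that the payoff collapses to $\lambda_2/r$ and is the same for every $m$ by regularity. Once that identity is secured, all the numerology is supplied by the parameter relations of Proposition~\ref{Comb_t=5_splitt}, and the converse is essentially a double count sharpened by optimality. A secondary point requiring care is the convention that a successful substitution must produce a message decoding to a \emph{new} source state, which is what legitimizes replacing the count of rules making $m'$ valid by the count making it cross-group with $m$, and hence the appearance of $\lambda_2$ rather than $\lambda_1$.
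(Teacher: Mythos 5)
Your proposal is correct and follows essentially the same route as the paper: the paper does not reprove this cited result of Ogata \emph{et al.}\ but establishes the general case in Theorem~\ref{my2}, whose proof uses exactly your dictionary (blocks $B_i=\bigcup_j B_{i,j}$ $\leftrightarrow$ encoding rules $e_{B_i}(s_j)=B_{i,j}$ with the uniform strategy $1/b$ in one direction, and the ``at least one, hence by the counting bound exactly one'' sharpening in the other). Your explicit verification of $P_{d_0}=cu/v$ and $P_{d_1}=c(u-1)/(v-1)$ via the posterior computation simply supplies the details that the paper's one-line ``establishes the claim'' leaves implicit.
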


An example is as follows (cf.~\cite[Ex.\,5.2]{Ogata04}):

\begin{example}\label{example_concrete}
An optimal \mbox{$2$-splitting} authentication code for $u=2$ equiprobable source states, having $v=9$ messages and $b=9$ encoding rules, that is one-fold secure against spoofing can be constructed from the \mbox{$2$-$(9,9,4=2\times2,1)$} splitting design in Example~\ref{example}. Each encoding rule is used with probability $1/9$. An encoding matrix is given in Table~\ref{encod}.

\begin{table}
\renewcommand{\arraystretch}{1.3}
\caption{Splitting authentication code from a \mbox{$2$-$(9,9,4=2\times2,1)$} splitting design.}\label{encod}

\hspace*{0.1cm}

\begin{center}

\begin{tabular}{|c| c c|}
  \hline
  & $s_1$ & $s_2$ \\
  \hline
  $e_1$ & \{1,2\} & \{3,5\}\\
  $e_2$ & \{2,3\} & \{4,6\}\\
  $e_3$ & \{3,4\} & \{5,7\}\\
  $e_4$ & \{4,5\} & \{6,8\}\\
  $e_5$ & \{5,6\} & \{7,9\}\\
  $e_6$ & \{6,7\} & \{8,1\}\\
  $e_7$ & \{7,8\} & \{9,2\}\\
  $e_8$ & \{8,9\} & \{1,3\}\\
  $e_9$ & \{9,1\} & \{2,4\}\\
  \hline
\end{tabular}

\end{center}
\end{table}

\end{example}

We give a natural extension of Theorem~\ref{ogata_char}. We prove that optimal splitting authentication codes that are multi-fold secure against spoofing can be characterized in terms of splitting \mbox{$t$-designs}.

\begin{theorem}\label{my2}
Suppose there is a \mbox{$t$-$(v,b,l=cu,1)$} splitting design with $t \geq 2$. Then there is an optimal \mbox{$c$-splitting} authentication code for $u$ equiprobable source states, having $v$ messages and ${v \choose t}/[c^t{u \choose t}]$ encoding rules, that is $(t-1)$-fold secure against spoofing. Conversely, if there is an optimal \mbox{$c$-splitting} authentication code for $u$ source states, having  $v$ messages and ${v \choose t}/[c^t{u \choose t}]$ encoding rules, that is \mbox{$(t-1)$-fold} secure against spoofing, then there is a \mbox{$t$-$(v,b,l=cu,1)$} splitting design.
\end{theorem}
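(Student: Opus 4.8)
The plan is to prove the two implications separately, in both directions exploiting the correspondence that identifies the $v$ points of the design with the $v$ messages and each block with one encoding rule. I would model the forward direction on the construction behind Theorem~\ref{ogata_char} and the converse on the double count in the proof of Theorem~\ref{my1}.

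For the forward direction, given a $t$-$(v,b,l=cu,1)$ splitting design $\mathcal{D}=(X,\mathcal{B})$, I would set $\mathcal{M}:=X$, take $u$ equiprobable source states, and turn each block $B_i=B_{i,1}\cup\cdots\cup B_{i,u}$ into an encoding rule $e_i$ by declaring $e_i(s_j):=B_{i,j}$. Since each part has size $c$, this is a $c$-splitting code with $\left|M(e_i)\right|=cu$ and $\max_s\left|e_i(s)\right|=c$; its number of encoding rules is $b={v\choose t}/[c^t{u\choose t}]$ by Proposition~\ref{Comb_t=5_splitt}(b) with $\lambda=1$, which already meets the bound of Corollary~\ref{Cor2} and hence gives optimality. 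Choosing the encoding rules equiprobably, I then have to verify $(t-1)$-fold security, i.e. that $P_{d_i}=c(u-i)/(v-i)$ for $0\le i\le t-1$. The crucial input is Proposition~\ref{s-design_splitt}: with $\lambda=1$ it yields $\lambda_i={v-i\choose t-i}/[c^{t-i}{u-i\choose t-i}]$, and a short computation gives the key identity $\lambda_{i+1}/\lambda_i=c(u-i)/(v-i)$. Because $\lambda=1$ forces every admissible $(i+1)$-configuration of messages to lie in exactly $\lambda_{i+1}$ encoding rules, the opponent who has observed $i$ messages spanning $i$ distinct source states faces a posterior that is uniform over the $\lambda_i$ compatible rules, and every valid continuation message is accepted with the same conditional probability $\lambda_{i+1}/\lambda_i$. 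Hence the opponent's optimal success probability equals $\lambda_{i+1}/\lambda_i=c(u-i)/(v-i)$, matching the lower bound of Corollary~\ref{Cor1} with equality.

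For the converse, I start from an optimal $(t-1)$-fold secure $c$-splitting code with the stated parameters and build $\mathcal{D}$ by setting $X:=\mathcal{M}$ and taking as blocks the sets $M(e)=\bigcup_{s}e(s)$ with parts $e(s)$; each part has size $c$, and the requirement $e(s)\cap e(s')=\emptyset$ makes the union disjoint, so every block is an $l$-set with $l=cu$. It remains to check condition (iii) with $\lambda=1$. The proof of Theorem~\ref{my1} shows that $(t-1)$-fold security forces every $t$-subset of messages to be valid, and spread across $t$ distinct source states, under at least one encoding rule; translated to the design this says every $t$-subset of $X$ lies in at least one block in the sense of Definition~\ref{splittdesign}(iii). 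A double count of the pairs $(T,e)$, with $T$ a $t$-subset that is valid and spread under $e$, gives $\sum_T\lambda(T)=\left|\mathcal{E}\right|\,c^t{u\choose t}$, since each encoding rule contributes exactly $c^t{u\choose t}$ spread $t$-subsets (choose $t$ of the $u$ parts, then one message from each). Optimality, $\left|\mathcal{E}\right|={v\choose t}/[c^t{u\choose t}]$, makes this sum equal to ${v\choose t}$; as $\lambda(T)\ge 1$ for each of the ${v\choose t}$ subsets $T$, every $\lambda(T)$ must equal $1$. This is precisely Definition~\ref{splittdesign}(iii) with $\lambda=1$, so $\mathcal{D}$ is a $t$-$(v,b,l=cu,1)$ splitting design.

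I expect the main obstacle to be the exact evaluation of $P_{d_i}$ in the forward direction: one must argue carefully that, conditioned on the $i$ observed messages, the opponent cannot beat the uniform continuation probability $\lambda_{i+1}/\lambda_i$, which relies on the rigidity from $\lambda=1$ together with the part-wise structure of the design (so that no continuation message covering an already-used source state, or not forming a spread configuration, can help). The converse, by contrast, is a routine consequence of Theorem~\ref{my1} and the optimality hypothesis. As a sanity check I would confirm the impersonation case $i=0$ directly, where $P_{d_0}=\lambda_1/b=l/v=cu/v$ by Proposition~\ref{Comb_t=5_splitt}(a).
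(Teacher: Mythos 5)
Your proposal is correct and takes essentially the same route as the paper: both directions rest on the identification of messages with points and encoding rules with blocks via $B_e=\bigcup_j e(s_j)$, with the converse obtained from the equality case of the count behind Theorem~\ref{my1}. The paper's own proof is considerably terser (the forward direction is dismissed with ``using every encoding rule with equal probability $1/b$ establishes the claim''), so your explicit verification that $P_{d_i}=\lambda_{i+1}/\lambda_i=c(u-i)/(v-i)$ under the uniform encoding and splitting strategies simply supplies detail the paper leaves implicit.
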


\begin{proof}
Let us first assume that there is an optimal \mbox{$c$-splitting} authentication code for $\left|\mathcal{S}\right|:=u$ source states, having  $\left|\mathcal{M}\right|:=v$ messages and $\left|\mathcal{E}\right|:={v \choose t}/[c^t{u \choose t}]$ encoding rules, that is \mbox{$(t-1)$-fold} secure against spoofing. In order to meet the lower bound in Theorem~\ref{my1} with equality, every set of $t$ distinct messages must be valid under precisely one encoding rule, in such a way that they define different source states. For $e \in \mathcal{E}$, let us define a block $B_e \in \mathcal{B}$ as disjoint union
\[B_e := \bigcup_{j=1}^u e(s_j).\]
Then $(X,\mathcal{B})=(\mathcal{M},\{B_e:e\in \mathcal{E}\})$ is a \mbox{$t$-$(v,b,l=cu,1)$} splitting design in view of Definition~\ref{splittdesign}.

To prove the other direction, let $\left|\mathcal{M}\right|:=v$. For every block $B_i \in \mathcal{B}$ with
\[B_i= \bigcup_{j=1}^u B_{i,j},\]
we arbitrarily define an encoding rule $e_{B_i}$ via
\[e_{B_i}(s_j):=B_{i,j}\]
for each $1 \leq j \leq u$. Using every encoding rule with equal probability $1/b$ establishes the claim.\qed
\end{proof}

We present an example:

\begin{example}
An optimal \mbox{$2$-splitting} authentication code for $u=3$ equiprobable source states, having $v=10$ messages and $b=15$ encoding rules, that is two-fold secure against spoofing can be constructed from the \mbox{$3$-$(10,15,6=2\times3,1)$} splitting design in Example~\ref{example_Hu}. Each encoding rule is used with probability $1/15$. An encoding matrix is given in Table~\ref{encod_Hu}.

\begin{table}
\renewcommand{\arraystretch}{1.3}
\caption{Splitting authentication code from a \mbox{$3$-$(10,15,6=2\times3,1)$} splitting design.}\label{encod_Hu}

\hspace*{0.1cm}

\begin{center}

\begin{tabular}{|c| c c c|}
  \hline
  & $s_1$ & $s_2$ & $s_3$\\
  \hline
$e_1$ &\{1,2\} &\{4,0\} &\{5,9\}\\
$e_2$ &\{1,3\} &\{2,8\} &\{5,0\}\\
$e_3$ &\{1,4\} &\{3,8\} &\{6,9\}\\
$e_4$ &\{1,5\} &\{4,7\} &\{6,8\}\\
$e_5$ &\{1,7\} &\{2,3\} &\{4,8\}\\
$e_6$ &\{1,8\} &\{2,5\} &\{6,9\}\\
$e_7$ &\{1,8\} &\{6,7\} &\{9,0\}\\
$e_8$ &\{1,9\} &\{2,5\} &\{3,7\}\\
$e_9$ &\{1,9\} &\{3,4\} &\{7,0\}\\
$e_{10}$ &\{2,4\} &\{5,6\} &\{7,9\}\\
$e_{11}$ &\{2,5\} &\{4,7\} &\{3,0\}\\
$e_{12}$ &\{2,9\} &\{6,8\} &\{3,0\}\\
$e_{13}$ &\{2,0\} &\{4,5\} &\{6,8\}\\
$e_{14}$ &\{3,7\} &\{4,6\} &\{8,0\}\\
$e_{15}$ &\{3,9\} &\{5,7\} &\{6,0\}\\
\hline
\end{tabular}

\end{center}
\end{table}

\end{example}


\section{Conclusion}\label{conclusion}

We have given a combinatorial lower bound on the number of encoding rules for splitting authentication codes that are multi-fold secure against spoofing attacks. Moreover, we have provided a combinatorial characterization of those codes that attain these bounds. Our characterization was based on a new type of combinatorial designs, which we introduced and for which basic necessary conditions regarding their existence were given.
For future research, at least two directions would be of interest:

\begin{enumerate}

\item[(i)] Construction of multi-fold secure splitting authentication codes: Using Theorem~\ref{my2}, this asks for constructing \mbox{$t$-$(v,b,l=cu,1)$} splitting design for $t>2$. We remark that in the case when $t=2$, various combinatorial constructions have been obtained recently in~\cite{Du04,Ge05,Wan06} via recursive and direct constructions by the method of differences.

\smallskip

\item[(ii)] Including the aspect of perfect secrecy: Is it possible to give a characterization of optimal splitting authentication codes that are multi-fold secure against spoofing and simultaneously achieve perfect \mbox{secrecy} in the sense of Shannon? For the case of multi-fold secure authentication codes without splitting such a result has been established lately in~\cite{Hub2009} (cf.~Theorem~\ref{mythm1_ISIT}).

\end{enumerate}

\section*{Acknowledgments}
I thank the two anonymous referees for their careful reading and suggestions that helped improving the presentation of the paper. I also thank
Moritz Eilers and Christoff Hische for running the computer search for Example~\ref{example_Hu}.


\end{document}